\newcommand{\OR}{\ensuremath{{\rm OR}}}
\newcommand{\val}{\ensuremath{{\rm value}}}
\newcommand{\sat}{\ensuremath{\textnormal{sat}}}
\newcommand{\E}{\ensuremath{\mathbf{E}}}
\newcommand{\luar}{\ensuremath{\leftarrow_{\rm u.a.r.}}}
\newcommand{\ppz}{\texttt{\textup{ppz}}}
\newcommand{\ppsz}{\texttt{\textup{ppsz}}}
\newcommand{\ignore}[1]{}
\begin{document}
\pagestyle{headings}

\title{PPZ For More Than Two Truth Values -- An Algorithm
for Constraint Satisfaction Problems} 

\author{Dominik Scheder}

\institute{Theoretical Computer Science, ETH Z\"urich\\
  CH-8092 Z\"urich, Switzerland\\
  \email{dscheder@inf.ethz.ch}\\ \vspace{5mm}
  \today
}

\maketitle

\begin{abstract}
  We analyze the so-called ppz algorithm for $(d,k)$-CSP problems for
  general values of $d$ (number of values a variable can take) and $k$
  (number of literals per constraint). To analyze its success
  probability, we prove a correlation inequality for submodular
  functions.
\end{abstract}

\section{Introduction}

Consider the following extremely simple randomized algorithm for
$k$-SAT: Pick a variable uniformly at random and call it $x$.  If the
formula $F$ contains the unit clause $(x)$, set $x$ to $1$. If it
contains $(\bar{x})$, set it to $0$.  It if contains neither, set $x$
uniformly at random (and if it contains both unit clauses, give up).
This algorithm has been proposed and analyzed by Paturi, Pudl\'ak,
and Zane~\cite{ppz} and is called ppz.

The idea behind analyzing its success probability can be illustrated
nicely if we assume, for the moment, that $F$ has a unique satisfying
assignment $\alpha$ setting all variables to $1$. Switching a variable
it from $1$ to $0$ makes the formula unsatisfied. Therefore, there is
a clause $C_x = (x \vee \bar{y}_1 \vee \dots \vee\bar{y}_{k-1})$.
With probability $1/k$, the algorithm picks and sets
$y_1,\dots,y_{k-1}$ before picking $x$. Supposed they $y_j$ have been
set correctly (i.e., to $1$), the clause $C_x$ is now reduced to
$(x)$, and therefore $x$ is also set correctly. Intuitively, this
shows that on average, the algorithm has to guess $(1-1/k)n$ variables
correctly and can infer the correct values of the remaining $n/k$
variables. This increases the success probability of the algorithm
from $2^{-n}$ (simple stupid guessing) to $2^{-n(1-1/k)}$.

In this paper we generalize the sketched algorithm to general
constraint satisfaction problems, short CSPs. These are a
generalization of boolean satisfiability to problems involving more
than two truth values.  A set of $n$ variables $x_1,\dots,x_n$ is
given, each of which can take a value from $[d] := \{1,\dots,d\}$.
Each assignment to the $n$ variables can be represented as an element
of $[d]^n$. A {\em literal} is an expression of the form $(x_i \ne c)$
for some $c \in [d]$. A CSP {\em formula} consists of a conjunction
(AND) of {\em constraints}, where a constraint is a disjunction (OR)
of literals.  We speak of $(d,k)$-CSP formula if each constraint
consists of at most $k$ literals. Finally, $(d,k)$-CSP is the problem
of deciding whether a given $(d,k)$-CSP formula has a satisfying
assignment.  Note that $(2,k)$-CSP is the same as $k$-SAT. Also
$(d,k)$-CSP is well-known to be NP-complete, unless $d=1$, $k=1$, or
$d=k=2$.  We can manipulate a CSP formula $F$ by permanently
substituting a value $c$ for a variable $x$. This means we remove all
satisfied constraints, i.e., those containing a literal $(x \ne c')$
for some $c' \ne c$, and from the remaining constraints remove the
literal $(x \ne c)$, if
present. We denote the resulting formula by $F^{[x \mapsto c]}$.

It is obvious how to generalize the algorithm to $(d,k)$-CSP problems.
Again we process the variables in a random order. When picking $x$, we
collect all unit constraints of the form $(x \ne c)$ and call the
value $c$ {\em forbidden}. Values in $[d]$ which are not forbidden are
called {\em allowed}, and we set $x$ to a value that we choose
uniformly at random from all allowed values. How can one analyze the
success probability? Let us demonstrate this for $d=k=3$. Suppose 
$F$ has exactly one satisfying assignment $\alpha=(1,\dots,1)$. Since
changing the value of a variable $x$ from $1$ to $2$ or to $3$ makes
$F$ unsatisfied, we find {\em critical constraints}
\begin{eqnarray*}
  (x \ne 2 \vee y \ne 1 \vee z \ne 1)\\
  (x \ne 3 \vee u \ne 1 \vee v \ne 1)
\end{eqnarray*}
If all variables $y,z,u,v$ are picked before $x$, then there is only
one allowed value for $x$ left, namely $1$, and with probability $1$,
the algorithm picks the correct values. If $y,z$ come before $x$, but
at least one of $u$ or $v$ come after $x$, then it is possible that
the values $1$ and $3$ are allowed, and the algorithm picks the
correct value with probability $1/2$. In theory, we could list all
possible cases and compute their probability. But here comes the
difficulty: The probability of all variables $y,z,u,v$ being picked
before $x$ depends on whether these variables are distinct! Maybe
$y=u$, or $z=v$... For general $d$ and $k$, we get $d-1$ 
critical constraints
\begin{eqnarray}
  C_{2} & := & (x \ne 2 \vee y_{1}^{(2)}\ne 1 \vee \dots
  \vee y_{k-1}^{(2)} \ne 1)\nonumber \\
  C_{3} & := & (x \ne 3 \vee y_{1}^{(3)} \ne 1\vee \dots
  \vee y_{k-1}^{(3)} \ne 1)\nonumber \\
  & \dots & \label{begin-critical-clauses} \\
  C_{d} & := & (x \ne d  \vee  y_{1}^{(d)}\ne 1 \vee  \dots
   \vee  y_{k-1}^{(d)} \ne 1) \ . \nonumber 
\end{eqnarray}
We are interested in the distribution of the number of allowed values
for $x$. However, the above constraints can intersect in complicated
ways, since we have no guarantee that the variables $y^{(c)}_j$ are
distinct. Our main technical contribution is a sort of correlation
lemma showing that in the worst case, the $y^{(c)}_j$ are indeed
distinct, and therefore we can focus on that case, which we are able
to analyze.
\subsection*{Previous Work}
Feder and Motwani~\cite{feder-motwani} were the first to generalize
the ppz-algorithm to CSP problems. In their paper, they consider
$(d,2)$-CSP problem, i.e., each variable can take on $d$ values, and
every constraint has at most two literals.  In this case, the clauses
$C_2,\dots,C_d$ cannot form complex patterns. Feder and Motwani show
that the worst case happens if (i) the variables $y_1^{(2)},\dots,
y_1^{(d)}$ are pairwise distinct and (ii) the CSP formula has a unique
satisfying assignment. However,
their proofs do not directly generalize to higher values of $k$.

Recently, Li, Li, Liu, and Xu~\cite{lililiuxu} analyzed ppz for
general CSP problems (i.e., $d,k \geq 3$). Their analysis is overly
pessimistic, though, since they distinguish only the following two
cases, for each variable $x$: When ppz processes $x$, then either (i)
all $d$ values are allowed, or (ii) at least one value is forbidden.
In case (ii), ppz chooses one value randomly from at most $d-1$
values.  Since case (ii) happens with some reasonable probability,
this gives a better success probability than the trivial $d^{-n}$.
However, the authors ignore the case that two, three, or more values
are forbidden and lump it together with case (ii). Therefore, their
analysis does not capture the full power of ppz.

\ignore{ 
 Note that for
$d=2$ no such problem arises, since for each variable we consider only
one such critical constraint. For the case $d \geq 3$ and $k=2$, Feder
and Motwani~\cite{feder-motwani} proved that in the worst case the variables
$y^{(c)}_j$ are distinct. However, for $k \geq 3$, the constraints
above can intersect in much more complicated ways than for $k=2$, thus
the need for a more general correlation lemma.
}
\subsection*{Our Contribution}
Our contribution is to show that ``everything works as expected'',
i.e., that in the worst case all variables $y^{(c)}_j$ in
(\ref{begin-critical-clauses}) are distinct and the formula has a
unique satisfying assignment. For this case, we can compute (or at
least, bound from below) the success probability of the algorithm.
\begin{theorem}
  For $d,k \geq 1$, define 
  $$  
  G(d,k) := \sum_{j=0}^{d-1} \log_2(1+j){{d-1} \choose j}
  \int_0^1 (1-r^{k-1})^j (r^{k-1})^{d-1-j} dr \ .
  $$
  Then there is a randomized algorithm running in polynomial time
  which, given a $(d,k)$-CSP formula over $n$ variables, returns a
  satisfying assignment with probability at least $2^{-nG(d,k)}$.
\label{main-theorem}
\end{theorem}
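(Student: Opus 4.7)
The plan is to follow the standard ppz framework, lifted to the multi-valued setting. Fix any satisfying assignment $\alpha$ of the input formula $F$; by relabelling values per variable, we may assume $\alpha = (1,\dots,1)$. Run the generalised ppz on a uniformly random permutation $\pi$ of the $n$ variables, and for each $x$ let $a(x,\pi)$ denote the number of allowed values at the moment $x$ is processed, conditional on all previously processed variables having been set to $1$. Since $\alpha$ is always among the allowed values, the probability that the algorithm outputs $\alpha$ equals
$$
\E_\pi\Bigl[\prod_x \frac{1}{a(x,\pi)}\Bigr] \;=\; \E_\pi\bigl[2^{-\sum_x \log_2 a(x,\pi)}\bigr] \;\geq\; 2^{-\sum_x \E_\pi[\log_2 a(x,\pi)]}
$$
by Jensen's inequality applied to the convex function $t \mapsto 2^{-t}$. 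It thus suffices to prove the per-variable bound $\E_\pi[\log_2 a(x,\pi)] \leq G(d,k)$.

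To control $a(x,\pi)$, I would invoke the critical-constraint argument from the introduction. For each wrong value $c \in \{2,\dots,d\}$, because flipping $\alpha(x)$ to $c$ falsifies $F$, there is a critical constraint $C_c^x$ of the form $(x \neq c \,\vee \, \bigvee_{y \in Y_c^x}(y \neq 1))$ with $|Y_c^x| \leq k-1$. The value $c$ is forbidden exactly when $C_c^x$ has been reduced to the unit $(x \neq c)$, i.e., when every variable in $Y_c^x$ precedes $x$ in $\pi$. Writing $A_c$ for this event, we have $a(x,\pi) = d - \sum_{c=2}^d \mathbf{1}[A_c]$.

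The main step, and the hard part, is the correlation lemma announced in the abstract: over all tuples $(Y_2^x,\dots,Y_d^x)$ with $|Y_c^x| \leq k-1$, the expectation $\E_\pi[\log_2 a(x,\pi)]$ is maximised when the sets $Y_c^x$ are pairwise disjoint and each of size exactly $k-1$. Intuitively, overlapping the $Y_c^x$ introduces positive correlation among the monotone events $A_c$, which concentrates $\sum_c \mathbf{1}[A_c]$ near its extremes; concavity of $r \mapsto \log_2(d-r)$ then penalises this concentration. I would try to formalise this with a local exchange argument: replace a variable shared by two sets $Y_c^x,Y_{c'}^x$ with a fresh variable, and show via the submodularity of an appropriate set function (or an FKG/Ahlswede--Daykin-style inequality for monotone events under the permutation measure) that the expectation only increases. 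This is the only step that does not follow from routine ppz machinery, and formulating the submodular object so that the exchange inequality actually applies is where I expect most of the work to lie.

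Granting the correlation lemma, the remaining calculation is routine. Condition on $x$ having relative position $r \in [0,1]$ in $\pi$. When the $Y_c^x$ are pairwise disjoint and each of size $k-1$, the events $A_c$ are, in the $n\to\infty$ limit, mutually independent with $\pr[A_c] = r^{k-1}$, so the number of wrong values still allowed is $\mathrm{Binomial}(d-1, 1-r^{k-1})$, giving
$$
\E_\pi[\log_2 a(x,\pi) \mid r] \;=\; \sum_{j=0}^{d-1} \binom{d-1}{j}(1-r^{k-1})^j(r^{k-1})^{d-1-j}\log_2(1+j) \ .
$$
Integrating over $r \in [0,1]$ (using that the position of $x$ is asymptotically uniform, with the finite-$n$ correction being negligible) yields exactly $G(d,k)$. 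Summing the per-variable bound over all $n$ variables and plugging into the Jensen inequality above gives the claimed $2^{-nG(d,k)}$ success probability, and the algorithm clearly runs in polynomial time.
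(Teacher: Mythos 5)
There is a genuine gap: your argument implicitly assumes that every wrong value is critical. You write ``for each wrong value $c \in \{2,\dots,d\}$, because flipping $\alpha(x)$ to $c$ falsifies $F$, there is a critical constraint'' --- but if $F$ has more than one satisfying assignment, flipping $x$ to $c$ may very well still satisfy $F$, and then no such constraint exists and $c$ stays allowed with probability $1$. Consequently the per-variable bound $\E_\pi[\log_2 a(x,\pi)] \leq G(d,k)$ is simply false in general (take $F$ with many satisfying assignments; in the extreme case of the empty formula one gets $\E[\log_2 a] = \log_2 d > G(d,k)$), and bounding the probability of returning one \emph{fixed} $\alpha$ by $2^{-nG(d,k)}$ cannot work. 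The paper deals with exactly this by introducing the looseness $\ell(\alpha,x)$ (the number of values $c$ for which $\alpha[x\mapsto c]$ still satisfies $F$), proving the weaker per-variable bound $\E[\log_2|S(x,\pi,\alpha)|] \leq \log_2 \ell(\alpha,x) + G(d,k)$ via the estimate $\log_2(a+b)\leq \log_2 a + \log_2(b+1)$, and then summing the resulting bound $2^{-nG(d,k)}\prod_x 1/\ell(\alpha,x)$ over \emph{all} satisfying assignments, using the isolation inequality $\sum_{\alpha\in\sat(F)}\prod_{x}1/\ell(\alpha,x)\geq 1$ (Lemma~\ref{lemma-kraft}, due to Feder and Motwani). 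Without this aggregation step over $\sat(F)$ your proof only covers the case where every variable has looseness $1$ (e.g.\ a unique satisfying assignment), which is not the statement of Theorem~\ref{main-theorem}.

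Two further remarks. First, your treatment of the conditioning on the position of $x$ (``in the $n\to\infty$ limit, mutually independent'', ``finite-$n$ correction negligible'') is not rigorous as stated and is also unnecessary: the paper uses the standard continuous placement trick, assigning each variable an independent uniform label $\pi(y)\in[0,1]$, so that conditioned on $\pi(x)=r$ the events ``$y$ precedes $x$'' are \emph{exactly} independent with probability $r$ for every $n$, and the integral over $r$ is exact, not asymptotic. Second, your sketch of the correlation step (fresh-variable exchange plus submodularity) is indeed the right idea and matches the paper's Lemma~\ref{worst-case-independent}: the paper proves that gluing input coordinates can only decrease the expectation of a monotone submodular function (Lemma~\ref{lemma-corr}), and that $\log_2\bigl(\ell+\sum_c \OR(\cdot)\bigr)$ is monotone and submodular (via Lemma~\ref{lemma-concave}); but in your proposal this central lemma is only announced, not proved.
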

The algorithm we analyze in this paper is not novel.  It is a
straightforward generalization of the $\ppz$ algorithm to CSP problems
with more than two truth values. However, its analysis is
significantly more difficult than for $d=2$ (and also more difficult
than for large $d$ and $k=2$, the case Feder and
Motwani~\cite{feder-motwani} investigated).

\subsection*{Comparison}

We compare the success probability of Sch\"oning's random walk
algorithm with that of ppz. For ppz, we state the bound given by Li,
Li, Liu, and Xu~\cite{lililiuxu} and by this paper. All bounds are
approximate and ignore polynomial factors.

\begin{center}
\begin{tabular}{c||c|c|c}
  $(d,k)$ & Sch\"oning~\cite{Schoening99}\ & 
  Li, Li, Liu, and Xu~\cite{lililiuxu}\ & 
  this paper\\ \hline 
  $(2,3)$ & $1.334^{-n}$ & $1.588^{-n}$  & $1.588^{-n}$ \\ \hline
  $(3,3)$ & $2^{-n}$   & $2.62^{-n}$ & $2.077^{-n}$ \\ \hline
  $(5,4)$ & $3.75^{-n}$  & $4.73$  & $3.672^{-n}$ \\ \hline
  $(6,4)$ & $4.5^{-n}$ & $5.73^{-n}$ & $4.33^{-n}$
\end{tabular}
\end{center}

For small values of $d$, in particular for the boolean case $d=2$,
Sch\"oning's random walk algorithm is much faster than ppz, but ppz
overtakes Sch\"oning already for moderately large values of $d$ and
thus is, to our knowledge, the currently fastest algorithm for
$(d,k)$-CSP.

\section{The Algorithm}

The algorithm itself is simple. It processes the variables
$x_1,\dots,x_n$ according to some random permutation $\pi$. When the
algorithm processes the variable $x$, it collects all unit constraints
of the form $(x \ne c)$ and calls $c$ {\em forbidden}. A truth value
$c$ that is not forbidden is called {\em allowed}.  If the formula is
satisfiable when the algorithm processes $x$, there is obviously at
least one allowed value.  The algorithm chooses uniformly at random an
allowed value $c$ and sets $x$ to $c$, reducing the formula. Then it
proceeds to the next variable. For technical reasons, we think of the
permutation $\pi$ as part of the input to the algorithm, and sampling
$\pi$ uniformly at random from all $n!$ permutations before calling
the algorithm. The algorithm is described formally in
Algorithm~\ref{ppz}.
\begin{algorithm}
\caption{\ppz($F$: a $(d,k)$-CSP formula 
  over variables $V:=\{x_1,\dots,x_n\}$, $\pi$: a permutation of $V$)}
\label{ppz}
\begin{algorithmic}[1]
  \STATE $\alpha :=$ the empty assignment
  \FOR{$i=1,\dots,n$}
  \STATE $x := x_{\pi(i)}$
  \STATE $S(x,\pi) := \{c \in [d] \ | \ (x \ne c) \not \in F\}$
  \IF {$S(x,\pi) = \emptyset$} \RETURN \texttt{failure}
  \ENDIF
  \STATE $b \luar S(x,\pi)$ \label{step-b}
  \STATE $\alpha := \alpha \cup [x \mapsto b]$
  \STATE $F:= F^{[x \mapsto b]}$
  \ENDFOR
  \IF{$\alpha$ satisfies $F$} \RETURN $\alpha$
  \ELSE 
  \RETURN \texttt{failure}
  \ENDIF
\end{algorithmic}
\end{algorithm}
To analyze the success probability of the algorithm, we can assume
that $F$ is satisfiable, i.e. the set $\sat(F)$ of satisfying
assignments is nonempty. This is because if $F$ is unsatisfiable, the
algorithm always correctly returns $\texttt{failure}$. For a fixed
satisfying assignment, we will bound the probability
\begin{eqnarray}
\Pr[\ppz(F,\pi) \textnormal{ returns } \alpha] \ ,
\label{prob-alpha-returned}
\end{eqnarray}
where the probability is over the choice of $\pi$ and over the
randomness used by $\ppz$. The overall success probability
is given by
\begin{eqnarray}
\Pr[\ppz(F,\pi) \textnormal{ is successful}]
= \sum_{\alpha \in \sat_V(F)} \Pr[\ppz(F,\pi) \textnormal{ returns }\alpha] \ .
\label{prob-success}
\end{eqnarray}
In the next section, we will bound (\ref{prob-alpha-returned}) from
below. The bound depends on the level of {\em isolatedness} of
$\alpha$: If $\alpha$ has many satisfying neighbors, its probability to
be returned by $\ppz$ decreases. However, the existence of many
satisfying assignments will in turn increase the sum in
(\ref{prob-success}). In the end, it turns out that the worst case
happens if $F$ has a unique satisfying assignment.  Observe that for
the $\ppz$-algorithm in the boolean case~\cite{ppz}, the unique
satisfiable case is also the worst case, whereas for the improved
version $\ppsz$~\cite{ppsz}, it is not, or at least not known to be.

\section{Analyzing the Success Probability}

\subsection{Preliminaries}

In this section, fix a satisfying assignment $\alpha$. For simplicity,
assume that $\alpha = (1,\dots,1)$, i.e. it sets every variable to
$1$. What is the probability that $\ppz$ returns $\alpha$? For a
permutation $\pi$ and a variable $x$, let $\beta$ be the partial truth
assignment obtained by restricting $\alpha$ to the variables that come
{\em before} $x$ in $\pi$, and define
$$
S(x,\pi,\alpha) := \{c \in [d] \ | \ (x \ne c) \not \in F^{[\beta]} \} \ .
$$
In words, we process the variables according to $\pi$ and set them
according to $\alpha$, but stop before processing $x$. We check which
truth values are not  forbidden for $x$ by a unit constraint,
and collect theses truth values in the set $S(x, \pi, \alpha)$. Let 
us give an example:

\paragraph{Example.} Let $d=3, k=2$, and $\alpha =(1,\dots,1)$. We consider
$$
F = (x \ne 2 \vee y \ne 1) \wedge (x \ne 3 \wedge z \ne 1) \ .
$$
For $\pi=(x,y,z)$, no value is forbidden when processing $x$, thus
$S(x,\pi,\alpha)=\{1,2,3\}$. For $\pi'=(y,x,z)$, then we consider the
partial assignment that sets $y$ to $1$, obtaining
$$
F^{[y \mapsto 1]} = (x \ne 2) \wedge (x \ne 3 \vee z \ne 1) \ ,
$$
and  $S(x,\pi',\alpha)=\{1,3\}$.
Last, for $\pi'' = (y,z,x)$, then we set $y$ and $z$ to $1$,
obtaining
$$
F^{[y \mapsto 1, z \mapsto 1]} = (x \ne 2) \wedge (x \ne 3) \ ,
$$
thus $S(x,\pi'',\alpha)= \{1\}$.$\hfill\Box$\\

Observe that $S(x,\pi,\alpha)$ is non-empty, since $\alpha(x) \in
S(x,\pi,\alpha)$, i.e. the value $\alpha$ assigns to $x$ is always
allowed. What has to happen in order for the algorithm to return
$\alpha$? In every step of $\ppz$, the value $b$ selected in
Line~\ref{step-b} for variable $x$ must be $\alpha(x)$.  Assume now
that this was the case in each of the first $i$ steps of the
algorithm, i.e., the variables $x_{\pi(1)},\dots,x_{\pi(i)}$ have been
set to their respective values under $\alpha$.  Let $x = x_{\pi(i+1)}$
be the variable processed in step $i+1$.  The set $S(x,\pi,\alpha)$
coincides with the set $S(x,\pi)$ of the algorithm, and therefore $x$
is set to $\alpha(x)$ with probability $1 / |S(x,\pi,\alpha)|$. Since
this holds in every step of the algorithm, we conclude that for a
fixed permutation $\pi$,
$$
\Pr[\ppz(F,\pi) \textnormal{ returns } \alpha] 
= \prod_{x \in V} \frac{1}{|S(x,\pi,\alpha)|} \ .
$$
For $\pi$ being chosen uniformly at random, we obtain 
$$
\Pr[\ppz(F,\pi) \textnormal{ returns } \alpha] 
= \E_{\pi}\left[\prod_{x \in V}^n
  \frac{1}{|S(x,\pi,\alpha)|}\right] \ .
$$
The expectation of a product is an uncomfortable term if the factors
are not independent.  The usual trick in this context is to apply
Jensen's inequality, hoping that we do not lose too much.
\begin{lemma}[Jensen's Inequality]
  Let $X$ be a random variable and $f: \mathbb{R} \rightarrow
  \mathbb{R}$ a convex function. Then $\E[f(X)] \geq f(\E[X])$,
  provided both expectations exist.
\end{lemma}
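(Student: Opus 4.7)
The plan is to prove Jensen's inequality by the standard supporting-hyperplane argument from convex analysis. The key fact I would invoke is that for any convex $f : \mathbb{R} \to \mathbb{R}$ and any point $x_0 \in \mathbb{R}$, there exists an affine function $\ell(x) = ax + b$ (a \emph{supporting line}) such that $\ell(x_0) = f(x_0)$ and $\ell(x) \leq f(x)$ for every $x \in \mathbb{R}$. In the differentiable case $\ell$ is simply the tangent line at $x_0$; in general one picks $a$ to be any number between the one-sided derivatives $f'_-(x_0)$ and $f'_+(x_0)$, both of which exist by monotonicity of the difference quotients of a convex function.

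With this in hand, the proof collapses to a single calculation. I would set $x_0 := \E[X]$, fix a supporting line $\ell(x) = ax + b$ at $x_0$, and then integrate the pointwise bound $f(X) \geq aX + b$ against the distribution of $X$:
\[
\E[f(X)] \geq \E[aX + b] = a\E[X] + b = \ell(\E[X]) = f(\E[X]) \ .
\]
Monotonicity and linearity of expectation handle the middle steps, while the hypothesis that both $\E[X]$ and $\E[f(X)]$ exist ensures that nothing is ill-defined along the way.

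There is no real obstacle here; the one technical point, namely the existence of a supporting line at an arbitrary point, is classical and independent of the CSP setting. In the applications that will follow in the paper, $f$ will be a concrete smooth convex function (presumably of the form $f(t) = 2^{-t}$, arising from taking logarithms inside $\E_\pi[\prod_x 1/|S(x,\pi,\alpha)|]$), so one can simply take $\ell$ to be the tangent line at $\E[X]$ and sidestep the nondifferentiable case entirely.
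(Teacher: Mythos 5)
Your proof is correct: the supporting-line argument ($f(X) \geq a X + b$ pointwise with $\ell(x)=ax+b$ the supporting line at $\E[X]$, then take expectations) is the standard and complete way to establish Jensen's inequality, and your handling of the nondifferentiable case via one-sided derivatives is fine. The paper itself offers no proof at all --- it states the lemma as a classical fact and immediately applies it with $f(t)=2^{-t}$ --- so there is nothing to compare against; your observation that in this application the tangent line at $\E[X]$ suffices is accurate.
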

We apply Jensen's inequality with the convex function being $f: x
\mapsto 2^{-x}$ and the random variable being $X =
\sum_{x \in V} \log_2 |S(x,\pi,\alpha)|$. With this notation, $f(X) =
\prod_{x \in V}^n \frac{1}{|S(x,\pi,\alpha)|}$, the expectation of
which we want to bound from below.
\begin{eqnarray}
\E\left[\prod_{x \in V} \frac{1}{|S(x,\pi,\alpha)|}\right] 
& = & \E\left[2^{-\sum_{x \in V} \log_2 |S(x,\pi,\alpha)|}\right] \nonumber \\
& \geq & 
2^{E[-\sum_{x \in V} \log_2 |S(x,\pi,\alpha)|]} \label{good-jensen}\\
& = & 2^{-\sum_{x \in V} E[\log_2 |S(x,\pi,\alpha)|]} \nonumber \ .
\end{eqnarray}
\begin{proposition}
  $\Pr[\ppz(F,\pi) \textnormal{ returns } \alpha] 
  \geq 2^{-\sum_{x \in V} E[\log_2 |S(x,\pi,\alpha)|]}$.
\label{prop-success}
\end{proposition}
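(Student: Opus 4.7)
The plan is to formalize the computation already laid out in the discussion preceding the proposition. First I would fix a permutation $\pi$ and prove by induction on $i$ that, conditioned on the event that the first $i$ choices of $\ppz(F,\pi)$ all agree with $\alpha$, the set $S(x_{\pi(i+1)},\pi)$ that the algorithm computes in line~\ref{step-b} at step $i+1$ coincides with $S(x_{\pi(i+1)},\pi,\alpha)$ as defined in the preliminaries. This is immediate from the definitions: both sets arise by substituting $\alpha$ into the variables preceding $x_{\pi(i+1)}$ and looking at which unit constraints of the form $(x\ne c)$ the substitution produces. Consequently, the conditional probability that step $i+1$ also matches $\alpha$ is exactly $1/|S(x_{\pi(i+1)},\pi,\alpha)|$, and multiplying through we obtain the pointwise identity
\[
\Pr[\ppz(F,\pi) \textnormal{ returns } \alpha \mid \pi] \;=\; \prod_{x \in V} \frac{1}{|S(x,\pi,\alpha)|}.
\]

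Second, I would take the expectation over the uniformly random $\pi$, so that
\[
\Pr[\ppz(F,\pi) \textnormal{ returns } \alpha] \;=\; \E_\pi\!\left[\,2^{-\sum_{x\in V}\log_2 |S(x,\pi,\alpha)|}\,\right].
\]
Setting $X := \sum_{x\in V}\log_2 |S(x,\pi,\alpha)|$ and noting that $f(t)=2^{-t}$ is convex on $\mathbb{R}$, an application of Jensen's inequality gives $\E[2^{-X}]\geq 2^{-\E[X]}$. Linearity of expectation then converts $\E[X]$ into $\sum_{x\in V}\E[\log_2|S(x,\pi,\alpha)|]$, which yields the stated bound. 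This is exactly the chain of equalities and the one inequality displayed in~(\ref{good-jensen}).

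Third, I would remark that nothing is really at stake here: $S(x,\pi,\alpha)$ is nonempty (it contains $\alpha(x)$), so $|S(x,\pi,\alpha)|\geq 1$ and $\log_2|S(x,\pi,\alpha)|\geq 0$, making all quantities well-defined and all expectations finite. The proposition is essentially a restatement of the Jensen step, and the only non-trivial ingredient is the inductive identification of $S(x,\pi)$ with $S(x,\pi,\alpha)$ on the good event. There is no genuine obstacle; the real work of the paper lies in estimating $\E[\log_2|S(x,\pi,\alpha)|]$, which the subsequent sections will address via the promised correlation lemma.
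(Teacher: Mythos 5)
Your proposal is correct and follows the paper's own argument essentially verbatim: the pointwise product formula for a fixed permutation (via identifying $S(x,\pi)$ with $S(x,\pi,\alpha)$ on the event that all earlier choices agree with $\alpha$), followed by the Jensen step with $t \mapsto 2^{-t}$ and linearity of expectation, is exactly the derivation in (\ref{good-jensen}). No gaps.
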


\paragraph{Example: The boolean case.} 
In the boolean case, the set $S(x,\pi,\alpha)$ is either $\{1\}$ or
$\{0,1\}$, and thus the logarithm is either $0$ or $1$. Therefore, the
term $E[\log_2 |S(x,\pi,\alpha)|]$ is the probability that the value
of $x$ is not determined by a unit clause, and thus has to be guessed.

So far the calculations are exactly as in the boolean $\ppz$.  This
will not stay that way for long. In the boolean case, there are only
two cases: Either the value of $x$ is determined by a unit clause
(in which we call $x$ {\em forced}), or it is not.  For $d \geq 3$,
there are more cases: The set of potential values for $x$ can be the
full range $[d]$, it can be just the singleton $\{1\}$, but it can
also be anything in between, and even if the algorithm cannot
determine the value of $x$ by looking at unit clauses, it will still
be happy if at least, say, $d/2$ values are forbidden by unit
clauses.

\subsection{Analyzing $E[\log_2 |S(x,\pi,\alpha)|]$}

In this section we prove an upper bound on $E[\log_2
|S(x,\pi,\alpha)|]$. We assume without loss of generality that $\alpha
= (1,\dots,1)$. There are $d$ truth assignments
$\alpha_1,\dots,\alpha_d$ agreeing with $\alpha$ on the variables $V
\setminus \{x\}$: For a value $c \in [d]$ we define $\alpha_c :=
\alpha[x \mapsto c]$, i.e., we change the value it assignment to $x$
to $c$, but keep all other variables fixed.  Clearly, $\alpha_1 =
\alpha$. The number of assignments among $\alpha_1,\dots,\alpha_d$
that satisfy $F$ is called the {\em looseness} of $\alpha$ at $x$,
denoted by
$$
\ell(\alpha,x) \ .
$$
Since $\alpha_1 = \alpha$ satisfies $F$, the looseness of $\alpha$ at
$x$ is at least $1$, and since there are $d$ possible values for $x$,
the looseness is at most $d$. Thus $1 \leq \ell(\alpha,x) \leq d$. If
$\alpha$ is the unique satisfying assignment, then $\ell(\alpha,x) =
1$ for every $x$.  Note that $\alpha$ being unique is sufficient, but
not necessary: Suppose $\alpha= (1,\dots,1)$ and
$\alpha'=(2,2,1,1,\dots,1)$ are the only two satisfying assignments.
Then $\ell(\alpha,x)=\ell(\alpha',x)=1$
for every variable $x$.

Why are we considering the looseness $\ell$ of $\alpha$ at $x$?
Suppose without loss of generality  that the assignments
$\alpha_1,\dots,\alpha_{\ell}$ satisfy $F$, whereas
$\alpha_{\ell+1},\dots,\alpha_d$ do not. The set $S(x,\pi,\alpha)$
is a random object depending on $\pi$, but one thing is sure:
$$
\textnormal{for all } c=1,\dots,\ell(\alpha,x): \ 
c \in S(x,\pi,\alpha) \ .
$$
For $\ell(\alpha,x) < c \leq d$, what is the probability that $c \in
S(x,\pi,\alpha)$? Since $\alpha_c$ does not satisfy $F$, there must be
a constraint in $F$ that is satisfied by $\alpha$ but not by
$\alpha_c$. Since $\alpha$ and $\alpha_c$ disagree on $x$ only, that
constraint must be of the following form:
\begin{eqnarray}
(x \ne c \vee y_2 \ne 1 \vee y_3 \ne 1 \vee \dots \vee y_k \ne 1) \ .
\label{critical-constraint}
\end{eqnarray}
For some $k-1$ variables $y_2,\dots,y_k$. We do not rule out
constraints with fewer than $k-1$ literals, but we capture this
by not insisting on the $y_j$ in (\ref{critical-constraint}) being
distinct. In any case, if the variables $y_2,\dots,y_k$ come before
$x$ in the permutation $\pi$, then $c \not \in S(x,\pi,\alpha)$: This
is because after setting to $1$ the variables that come before $x$,
the constraint in (\ref{critical-constraint}) has been reduced to $(x
\ne c)$. Note that $y_2,\dots,y_k$ coming before $x$ is sufficient for
$c \not \in S(x,\pi,\alpha)$, but not necessary, since there could be
multiple constraints of the form (\ref{critical-constraint}).
With probability at least $1/k$, all variables $y_2,\dots,y_k$
come before $x$, and we conclude:
\begin{proposition}
  If $\alpha_c$ does not satisfy $F$, then 
  $\Pr[ c \in S(x,c,\alpha) ] \leq 1 - 1/k$.
\end{proposition}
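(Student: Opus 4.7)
The argument is essentially sketched already in the paragraph preceding the proposition, so my plan is to isolate a single event on the random permutation $\pi$ that forces $c \not\in S(x,\pi,\alpha)$, and to lower-bound the probability of that event by $1/k$.

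First I would extract a critical constraint. Since $\alpha$ satisfies $F$ but $\alpha_c$ does not, some constraint $C \in F$ is satisfied by $\alpha$ but violated by $\alpha_c$. As $\alpha$ and $\alpha_c$ differ only at $x$, the constraint $C$ must contain the literal $(x \ne c)$ (this is the only literal whose truth value can change between $\alpha$ and $\alpha_c$). Every other literal of $C$ evaluates to false under $\alpha_c$ on a variable where $\alpha$ and $\alpha_c$ agree, so it evaluates to false under $\alpha$ as well; since $\alpha=(1,\dots,1)$, such literals must have the form $(y \ne 1)$. Hence I may write
$$C \;=\; (x \ne c \ \vee \ y_2 \ne 1 \ \vee \ \dots \ \vee \ y_{k'} \ne 1)$$
for some $k' \le k$ and variables $y_2,\dots,y_{k'} \in V\setminus\{x\}$, which are \emph{not} required to be distinct.

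Next I would exploit the order statistic of $\pi$ on $T := \{x, y_2, \dots, y_{k'}\}$. If $x$ is the last element of $T$ under $\pi$, then at the moment we reach $x$ in the construction of $S(x,\pi,\alpha)$, every $y_j$ has already been substituted by $1$, and the constraint $C$ has been reduced to the unit literal $(x \ne c)$. This forces $c \not\in S(x,\pi,\alpha)$. Because $\pi$ is uniform on the permutations of $V$, the induced order on $T$ is uniform as well, so the probability that $x$ is $\pi$-maximal in $T$ equals $1/|T|$. Since $|T| \le k$, this probability is at least $1/k$, yielding $\Pr[c \not\in S(x,\pi,\alpha)] \ge 1/k$ and hence $\Pr[c \in S(x,\pi,\alpha)] \le 1 - 1/k$.

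There is no serious obstacle in this proof; the only point I would flag explicitly is the handling of possible repetitions among the $y_j$ and of critical constraints with fewer than $k$ literals. Both only shrink $T$ and therefore \emph{increase} the probability $1/|T|$ that $x$ is last, so the bound is unaffected, which is why the statement is given with $k$ rather than with the exact arity of $C$.
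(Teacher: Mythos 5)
Your proof is correct and follows essentially the same route as the paper: extract a constraint satisfied by $\alpha$ but not by $\alpha_c$, observe it has the form $(x \ne c \vee y_2 \ne 1 \vee \dots \vee y_k \ne 1)$, and note that with probability at least $1/k$ the variable $x$ comes after all the $y_j$, forcing $c \not\in S(x,\pi,\alpha)$. Your explicit handling of repeated $y_j$ via the set $T$ and the uniform relative order is just a slightly more careful phrasing of the paper's ``with probability at least $1/k$'' step.
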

This proposition is nice, but not yet useful on its own. We can use it
to finish the analysis of the running time, however we will
end up with a suboptimal estimate.
\subsection{A suboptimal analysis of $\ppz$}
The function $t \mapsto \log_2(t)$ is concave. We apply Jensen's
inequality to conclude that
\begin{eqnarray}
\E[\log_2 |S(x,\pi,\alpha)|] & \leq & 
\log_2\left( \E[|S(x,\pi,\alpha)|]\right)
 = 
\log_2\left(\sum_{c=1}^n \Pr[c \in S(x,\pi,\alpha)]\right)
\label{bad-jensen}
\end{eqnarray}
We apply what we have learned above: For $c = 1,\dots,\ell(\alpha,x)$,
it always holds that $c \in S(x,\pi,\alpha)$, and for $c =
\ell(\alpha,x)+1,\dots,d$, we have computed that $\Pr[c \in
S(x,\pi,\alpha)] \leq 1 - 1/k$. Therefore
\begin{eqnarray*}
  \E[\log_2 |S(x,\pi,\alpha)|] \leq \log_2\left(\ell(\alpha,x) + 
    (d-\ell(\alpha,x))\left(1 - \frac{1}{k}\right)\right) \ .
\end{eqnarray*}
\textbf{The unique case.} If $\alpha$ is the unique satisfying
assignment, then $\ell(\alpha,x)=1$ for every variable $x$ in our CSP
formula $F$, and the above term becomes
$$
\log_2 \left(1 + \frac{(d-1)(k-1)}{k}\right)
= \log_2 \left(\frac{d(k-1)+1}{k}\right) \ . 
$$
We plug this into the bound of 
Proposition~\ref{prop-success}:
\begin{eqnarray*}
\Pr[\ppz \textnormal{ returns } \alpha] 
& \geq & 2^{-\sum_{i=1}^n E[\log_2 |S(x_i,\pi,\alpha)|]}\\
& \geq & 2^{-n \log_2 \left(\frac{d(k-1)+1}{k}\right)}\\
& = & \left(\frac{d(k-1)+1}{k}\right)^{-n} \ .
\end{eqnarray*}
The success probability of Sch\"oning's algorithm for $(d,k)$-CSP
problems is $\left(\frac{d(k-1)}{k}\right)^n$, and we see that even
for the unique case, our analysis of $\ppz$ does not yield anything
better than Sch\"oning. Discouraged by this failure, we do not
continue this suboptimal analysis for the non-unique case.

\subsection{Detour: Jensen's Inequality Here, There, and Everywhere}

The main culprit behind the poor performance of our analysis is
Jensen's inequality in (\ref{bad-jensen}).  To improve our analysis,
we refrain from applying Jensen's inequality there and instead try to
analyze the term $\E[\log_2 |S(x,\pi,\alpha)|]$ directly.  However,
recall that we have used Jensen's inequality before, in
(\ref{good-jensen}). Is it safe to apply it there? How can we tell
when applying it makes sense and when it definitely does not?  To
discuss this issue, we restate the two applications of Jensen's
inequality:
\begin{eqnarray}
  \E\left[2^{-\sum_{x \in V} \log_2 |S(x,\pi,\alpha)|}\right] & \geq & 
  2^{E[-\sum_{x \in V} \log_2 |S(x,\pi,\alpha)|]} \label{2-good-jensen}\\
E[\log_2 |S(x,\pi,\alpha)|] & \leq & 
\log_2\left( \E[|S(x,\pi,\alpha)|]\right)
\label{2-bad-jensen}
\end{eqnarray}
Formally, Jensen's inequality states that for a random variable
$X$ and a convex function $f$, it holds that
\begin{eqnarray}
  \E[f(X)] & \geq & f(\E[X])\label{jensen} \ ,
\end{eqnarray}
and by multiplying (\ref{jensen}) by $-1$ one obtains a similar
inequality for concave functions. As a rule of thumb, Jensen's
inequality is pretty tight if $X$ is very concentrated around its
expectation: In the most extreme case, $X$ is a constant, and
(\ref{jensen}) holds with equality. On the other extreme,
suppose $X$ is a random variable taking on values $-m$ and $m$, each
with probability $1/2$, and let $f: t\mapsto t^2$, which is a convex
function. The left-hand side of (\ref{jensen}) evaluates to $\E[f(X)]
= \E[X^2] = m^2$, whereas the right-hand side evaluates to $f(\E[X]) =
f(0)=0$, and Jensen's inequality is very loose indeed. What random
variables are we dealing with in (\ref{2-good-jensen}) and
(\ref{2-bad-jensen})? These are
\begin{eqnarray*}
  X & := & \sum_{x \in V} \log_2 |S(x,\pi,\alpha)| \qquad \textnormal{and}\\
  Y & := & |S(x,\pi,\alpha)| \ ,
\end{eqnarray*}
and the corresponding functions are $f: t \mapsto 2^{-t}$, which is
convex, and $g: t \mapsto \log_2 t$, which is concave. In both cases,
the underlying probability space is the set of all permutations of
$V$, endowed with the uniform distribution. We see that $Y$ is not
concentrated at all: Suppose $x$ comes first in $\pi$: If our CSP
formula $F$ contains no unit constraints, then $|S(x,\pi,\alpha)| =
d$, i.e., no truth value is forbidden by a unit constraints. On the
other hand, if $x$ comes last in $\pi$, then $|S(x,\pi,\alpha)| =
\ell(\alpha,x)$. Either case happens with probability $1/n$, which is
not very small. Thus, the random variable $|S(x,\pi,\alpha)|$
does not seem to be very concentrated. 

Contrary to $Y$, the random variable $X$ can be very
concentrated, in fact for certain CSP formulas it can be a constant:
Suppose $d=2$, i.e., the boolean case. Here $X$ simply counts the
number of non-forced variables. Consider the $2$-CNF formula
\begin{eqnarray}
\wedge_{i=1}^{n/2} (x_i \vee y_i) \wedge (x_i \vee \bar{y}_i) \wedge 
(\bar{x}_i \vee y_i) \ .
\label{X-is-constant}
\end{eqnarray}
This formula has $n$ variables, and $\alpha = (1,\dots,1)$ is the
unique satisfying assignment. Observe that if $x_i$ comes before $y_i$
in $\pi$, then $S(x_i,\pi,\alpha)=\{0,1\}$ and
$S(y_i,\pi,\alpha)=\{1\}$.  If $y_i$ comes before $x_i$, then
$S(x_i,\pi,\alpha)=\{1\}$ and $S(y_i,\pi,\alpha)=\{0,1\}$.  Hence $X
\equiv n/2$ is a constant. Readers who balk at the idea of supplying a
$2$-CNF formula as an example for an exponential-time algorithm may
try to generalize (\ref{X-is-constant}) for values
of $k \geq 3$.

\subsection{A Better Analysis}\label{subsection-a-better-analysis}

After this interlude on Jensen's inequality, let us try to bound
$\E[\log_2 |S(x,\pi,\alpha)|]$ directly. In this context, $x$ is some
variable, $\alpha$ is a satisfying assignment, for simplicity $\alpha
= (1,\dots,1)$, and $\pi$ is a permutation of the variables sampled
uniformly at random.  Again think of the $d$ truth assignments
$\alpha_1,\dots,\alpha_d$ obtained by setting $\alpha_c := \alpha[x
\mapsto c]$ for $c=1,\dots,d$. Among them, $\ell:=\ell(\alpha,x)$
satisfy the formula $F$. We assume without loss of generality that
those are $\alpha_1, \dots, \alpha_\ell$. Thus, for each $\ell < c
\leq d$, there is a constraint $C_{c}$ satisfied by $\alpha$ but not
by $\alpha_c$.  Let us write down these constraints:
\begin{eqnarray}
  C_{\ell+1} & := & (x \ne \ell+1 \vee y_{1}^{(\ell+1)}\ne 1 \vee \dots
  \vee y_{k-1}^{(\ell+1)} \ne 1)\nonumber \\
  C_{\ell+2} & := & (x \ne \ell+2 \vee y_{1}^{(\ell+2)} \ne 1\vee \dots
  \vee y_{k-1}^{(\ell+2)} \ne 1)\nonumber \\
  & \dots & \label{critical-clauses} \\
  C_{d} & := & (x \ne d \ \ \ \ \ \vee  \ y_{1}^{(d)}\ne 1 \  \ \vee  \dots
   \vee   y_{k-1}^{(d)} \ne 1) \nonumber
\end{eqnarray}
We define binary random variables $Y_{j}^{(c)}$ for $1 \leq j \leq k-1$
and $\ell+1 \leq c \leq d$ as follows:
\begin{eqnarray*}
  Y_j^{(c)} := \left\{ \begin{array}{ll}
      1 &  \textnormal{ if } y_j^{(c)} \textnormal{ comes after } x 
      \textnormal{ in the permutation } \pi \ ,\\ 
      0 &  \textnormal{otherwise} \ .
      \end{array}
\right.
\end{eqnarray*}
We define $Y^{(c)} := Y_1^{(c)} \vee \dots \vee Y_{k-1}^{(c)}$.  For
convenience we also introduce random variables $Y^{(1)},\dots,
Y^{(\ell)}$ that are constant $1$. Finally, we define $Y :=
\sum_{c=1}^d Y^{(c)}$.  Observe that $Y^{(c)} = 0$ if and only if all
variables $y_{1}^{c},\dots,y_{k-1}^{c}$ come before $x$ in the
permutation, in which case $c \not \in S(x,\pi,\alpha)$. Therefore,
\begin{eqnarray}
  |S(x,\pi,\alpha)| \leq Y
\label{S-leq-Y}
\end{eqnarray}
The variables $Y^{(1)},\dots,Y^{(\ell)}$ are constant $1$, whereas
each of the $Y^{(c+1)},\dots,Y^{(d)}$ is $0$ with probability at least
$1/k$. Since $1 \leq \ell \leq d$, the random variable $Y$ can take
values from $1$ to $d$. We want to bound
\begin{eqnarray}
  \E[\log_2|S(x,\alpha,\pi)|] \leq \E[\log_2(Y)]
  = \E\left[\log_2\left( \ell + \sum_{c=\ell+1}^d Y^{(c)}\right)\right]\  .
\label{ineq-S-Y}
\end{eqnarray}
For this, we must bound the probability $\Pr[Y = j]$ for
$j=1,\dots,d$. This is difficult, since the $Y^{(c)}$ are not
independent: For example, conditioning on $x$ coming very early in
$\pi$ increases the expectation of each $Y^{(c)}$, and conditioning on
$x$ coming late decreases it. We use a standard trick, also used by
Paturi, Pud\'ak, Saks and Zane~\cite{ppsz} to overcome these
dependencies: Instead of viewing $\pi$ as a permutation of $V$, we
think of it as a function $V \rightarrow [0,1]$ where for each $x \in
V$, its value $\pi(x)$ is chosen uniformly at random from $[0,1]$.
With probability $1$, all values $\pi(x)$ are distinct and therefore
give rise to a permutation. The trick is that for $x$, $y$, and $z$
being three distinct variables, the events ``y comes before x'' and
``z comes before x'' are independent when conditioning on $\pi(x) =
r$:
\begin{eqnarray*}
  \Pr[\pi(y) < \pi(x) \ | \ \pi(x)=r] & = & r\\
  \Pr[\pi(z) < \pi(x) \ | \ \pi(x)=r] & = & r\\
  \Pr[\pi(x) < \pi(x) \textnormal{ and }
  \pi(z) < \pi(x) \ | \ \pi(x)=r] & = & r^2\\
\end{eqnarray*}
Compare this to the unconditional probabilities:
\begin{eqnarray*}
  \Pr[\pi(y) < \pi(x)] & = & \frac{1}{2}\\
  \Pr[\pi(z) < \pi(x) \ | \ \pi(x)=r] & = & \frac{1}{2}\\
  \Pr[\pi(x) < \pi(x) \textnormal{ and }
  \pi(z) < \pi(x) \ | \ \pi(x)=r] & = & \frac{1}{3}\\
\end{eqnarray*}
We want to compute $\E[Y^{(c)} \ | \ \pi(x)=r]$.  We know that
$\E[Y^{(c)}_j \ | \ \pi(x)=r] = 1-r$, since $Y^{(c)}_j$ is $1$ if and
only if the boolean variable $y^{(c)}_j$ comes {\em after} $x$.  Since
we are dealing with constraints of size at most $k$, there are, for
each $\ell+1 \leq c \leq d$, at most $k-1$ distinct variables
$y^{(c)}_1,\dots,y^{(c)}_{k-1}$, and the probability that all come
before $x$, conditioned on $\pi(x)=r$, is at least $r^{k-1}$.
Therefore
\begin{eqnarray*}
  \E[Y^{(c)}] \leq 1 - r^{k-1} \ .
\end{eqnarray*}
Still, a variable $y^{(c)}_j$ might occur in several constraints among
$C_{\ell+1},\dots,C_d$, and therefore the $Y^{c}$ are not independent.
The main technical tool of our analysis is a lemma stating that the worst
case is achieved exactly if they in fact are independent, i.e., if all
variables $y^{(c)}_j$ for $c = \ell+1,\dots,d$ and $k=1,\dots,k-1$ are
distinct.
\begin{lemma}[Independence is Worst Case]
  Let $r$, $k$, $\ell$ and $Y^{(c)}$ be defined as above. Let
  $Z^{(\ell+1)},\dots,Z^{(d)}$ be independent binary random variables
  with $\E[Z_i] = 1-r^{k-1}$. Then
  $$
 \E\left[\log_2\left(\ell + \sum_{c=\ell+1}^d
    Y^{(c)}\right) \ | \ \pi(x)=r\right] \leq \E\left[\log_2\left(\ell +
  \sum_{c=\ell+1}^{d} Z^{(c)}\right)\right] \ .
  $$
\label{worst-case-independent}
\end{lemma}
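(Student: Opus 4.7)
Condition on $\pi(x) = r$. For each variable $y \neq x$ appearing in some $C_c$, set $B_y := \mathbf{1}[\pi(y) > \pi(x)]$; these are i.i.d.\ Bernoulli$(1-r)$, and $Y^{(c)} = \bigvee_{y \in S_c} B_y$, where $S_c \subseteq V \setminus \{x\}$ is the set of \emph{distinct} variables appearing among $y_1^{(c)},\dots,y_{k-1}^{(c)}$, so $|S_c| \le k-1$. The plan is to show that $\E[\log_2(\ell + \sum_c Y^{(c)}) \mid \pi(x) = r]$ is maximized, over all configurations $\{S_c\}$ with $|S_c| \le k-1$, by the \emph{canonical} configuration in which the $S_c$ are pairwise disjoint and each has size exactly $k-1$. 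In that canonical configuration, the $Y^{(c)}$ are independent with $\Pr[Y^{(c)} = 1] = 1 - r^{k-1}$, which is precisely the distribution of the $Z^{(c)}$ in the lemma.

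I will reach the canonical configuration by iteratively applying two elementary transformations on $\{S_c\}$, each of which weakly increases $\E[\log_2(\ell + \sum_c Y^{(c)})]$. The first is \emph{growth}: if $|S_c| < k-1$, insert a fresh (previously unused) variable into $S_c$. This pointwise enlarges $Y^{(c)}$, hence the full sum, so the inequality is immediate. The second is \emph{splitting}: if a variable $y$ occurs in $S_{c_1},\dots,S_{c_m}$ with $m \ge 2$, replace $y$ in each of $S_{c_2},\dots,S_{c_m}$ by a fresh independent copy $\tilde y_i$. Repeatedly applying growth until every $|S_c| = k-1$ and then splitting until no variable is shared terminates in the canonical configuration.

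The substantive step is showing that splitting does not decrease the expectation. Condition on every $B_{y'}$ with $y' \notin \{y, \tilde y_2,\dots,\tilde y_m\}$. The quantities $A_i := \bigvee_{y' \in S_{c_i}\setminus\{y\}} B_{y'}$ are then deterministic constants; set $U := |\{i : A_i = 0\}|$ and let $M$ absorb all deterministic contributions to $\ell + \sum_c Y^{(c)}$, namely $\ell$, $\sum_{c' \notin \{c_1,\dots,c_m\}} Y^{(c')}$, and $|\{i : A_i = 1\}|$. A short case split on whether $A_1 = 0$ or $A_1 = 1$ shows that, under the conditioning, $\ell + \sum_{c=\ell+1}^d Y^{(c)}$ is distributed as
\[
M + U \cdot B_y \quad \text{(before splitting)} \qquad \text{versus} \qquad M + \mathrm{Bin}(U, 1-r) \quad \text{(after splitting)},
\]
both with mean $M + U(1-r)$. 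The ``before'' distribution is the two-point law on $\{M, M+U\}$ with this prescribed mean; any distribution supported on $[M, M+U]$ with the same mean is dominated by that two-point law in the convex order (Hermite--Hadamard), so the ``after'' distribution yields a weakly larger expectation under the concave function $\log_2$.

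The main obstacle will be the case $A_1 = 0$ in the splitting analysis: the clause $c_1$ retains the original $y$ while $c_2,\dots,c_m$ receive fresh copies, so the post-split sum contains the summand $B_y$ alongside a $\mathrm{Bin}(U-1, 1-r)$ coming from the fresh copies. One must check that $B_y$ and the $B_{\tilde y_i}$ combine into $\mathrm{Bin}(U, 1-r)$ just as when $A_1 = 1$, so that the two-point versus Binomial comparison is uniform across the cases. Once this is verified, chaining the two monotone operations and passing to the canonical endpoint delivers the claimed inequality.
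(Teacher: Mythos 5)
Your proof is correct, but it takes a genuinely different route from the paper. The paper abstracts the problem into a general correlation inequality for submodular functions: it shows that $t \mapsto \log_2(\ell+t)$ composed with the monotone submodular function $\sum_c \OR(\cdot)$ is again submodular (Lemma~\ref{lemma-concave}), and then proves that for submodular $f$ with i.i.d.\ Bernoulli inputs, any ``glued restriction'' (identifying input coordinates, which is exactly what shared variables $y_j^{(c)}$ do) can only decrease $\E[f]$ (Lemma~\ref{lemma-corr}, proved by un-gluing one pair at a time and conditioning on the rest). You avoid submodularity altogether: after conditioning on $\pi(x)=r$ and on all $B_{y'}$ outside the shared variable and its fresh copies, un-sharing a variable used by $m$ clauses becomes a one-dimensional comparison of $M+U\cdot B_y$ (two-point law on $\{M,M+U\}$) against $M+\mathrm{Bin}(U,1-r)$, both with mean $M+U(1-r)$, and concavity of $\log_2$ (the chord lies below the graph, so the endpoint two-point law is extremal) gives the inequality; your separate ``growth'' step handles constraints with fewer than $k-1$ distinct variables by plain monotonicity, where the paper folds that case into repeated variables. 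The ``obstacle'' you flag for $A_1=0$ is in fact immediate: under the conditioning, $B_y$ is just one more Bernoulli$(1-r)$ independent of the fresh copies, so the $U$ relevant indicators sum to $\mathrm{Bin}(U,1-r)$ in both cases. In comparison, the paper's route buys a reusable general statement (the submodular gluing inequality it advertises as its technical contribution), at the price of the auxiliary concave-composition lemma; your route is more elementary and self-contained, handles an $m$-fold shared variable in a single step, but is tailored to the specific $\log_2(\ell+\sum\OR)$ structure --- indeed your two-point-versus-binomial step is precisely where the concavity that underlies the paper's submodularity argument does its work.
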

Before we prove the lemma in the next section, we first finish the
analysis of the algorithm.  \ignore{To simplify notation, we also
  introduce $d$ independent binary random variables
  $Z^{(1)},\dots,Z^{(d)}$ of expectation $1-r^{k-1}$ instead of the
  $d-\ell$ random variables in Lemma~\ref{worst-case-independent}.
  Since the logarithm is a monotone function, it holds that
\begin{eqnarray}
  \E\left[\log_2\left(\ell +
  \sum_{c=\ell+1}^{d} Z^{(c)}\right)\right] \leq 
  \E\left[\log_2\left(\ell +
  \sum_{c=1}^{d} Z^{(c)}\right)\right]\ .
\end{eqnarray}}
We apply a somewhat peculiar estimate: Let $a \geq 1$ and $b\geq 0$ 
be integers. Then $\log_2(a + b) 
\leq \log_2 (a \cdot (b+1)) = \log_2(a) + \log_2(b+1)$. Applying this
with 
$a := \ell$ and $b := \sum_{c=\ell+1}^d Z^{(c)}$ and combining it 
with the lemma and with (\ref{ineq-S-Y}), we obtain
\begin{eqnarray}
  \E[\log_2|S(x,\alpha,\pi)| \ | \ \pi(x)=r] \leq \log_2(\ell) + 
  \E\left[\log_2 \left(1+\sum_{c=\ell+1}^d Z^{(c)}\right)\right] \ .
\label{ineq-S-Z}
\end{eqnarray}
This estimate looks wasteful, but consider the case where $F$ has a
unique satisfying assignment $\alpha$: There, $\ell(\alpha,x)=1$ for
every variable $x$, and (\ref{ineq-S-Z}) holds with equality.  In
addition to $Z^{(\ell+1)},\dots,Z^{(d)}$, we introduce $\ell-1$ new
independent binary random variables $Z^{(2)},\dots,Z^{(\ell)}$, each
with expectation $1-r^{k-1}$, and define
\begin{eqnarray*}
  g(d,k,r) & := & \E\left[\log_2\left(1+\sum_{c=2}^d Z^{(c)}\right)\right] \ .
\end{eqnarray*}
The only difference between the expectation in (\ref{ineq-S-Z}) and
here is that here, we sum over $c=2,\dots,d$, whereas in
(\ref{ineq-S-Z}) we sum only over$c=\ell+1,\dots,d$. We get the
following version of (\ref{ineq-S-Z}):
\begin{eqnarray}
  \E[\log_2|S(x,\alpha,\pi)| \ \big| \ \pi(x)=r] \leq \log_2(\ell) + 
  g(d,k,r) \ .
\label{ineq-S-f}
\end{eqnarray}
We want to get rid of the condition $\pi(x)=r$. This is done
by integrating (\ref{ineq-S-f}) for $r$ from $0$ to $1$.
\begin{eqnarray}
  \E[\log_2|S(x,\alpha,\pi)|] \leq \log_2(\ell) + 
  \int_0^1 g(d,k,r) dr =: \log_2(\ell) + 
  G(d,k) \ .
\label{ineq-S-F}
\end{eqnarray}
This $G(d,k)$ is indeed the same $G(d,k)$ as in
Theorem~\ref{main-theorem}, and below we will do a detailed
calculation showing this.
\begin{lemma}[Lemma 1 in Feder, Motwani~\cite{feder-motwani}]
  Let $F$ be a satisfiable CSP formula over variable set $V$.  \ignore{For a
  truth assignment $\alpha \in \sat_V(F)$, we define $\val(\alpha) :=
  \prod_{x \in V} 1 / \ell(\alpha,x)$.} Then
  \begin{eqnarray}
    \sum_{\alpha \in \sat_V(F)} \prod_{x \in V} \frac{1}{\ell(\alpha,x)} \geq 1 \ .
  \end{eqnarray}
\label{lemma-kraft}
\end{lemma}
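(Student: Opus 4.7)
The plan is to exhibit, for any fixed ordering of the variables, a probability distribution on $\sat_V(F)$ whose individual probabilities are at most $\prod_{x \in V} 1/\ell(\alpha,x)$; since probabilities sum to $1$, the lemma then follows immediately.

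Fix an arbitrary permutation $v_1,\dots,v_n$ of $V$. For a partial assignment $\beta$ to $\{v_1,\dots,v_{i-1}\}$ that extends to at least one assignment in $\sat_V(F)$, let $\ell_i(\beta)$ denote the number of values $c \in [d]$ such that $\beta \cup [v_i \mapsto c]$ still extends to a satisfying assignment of $F$. These partial assignments form a rooted tree $T$ whose leaves (at depth $n$) are precisely the satisfying assignments of $F$, and where each depth-$(i-1)$ node $\beta$ has exactly $\ell_i(\beta)$ children. Assigning weight $1$ to the root and splitting each node's weight equally among its children yields, for each leaf $\alpha$, the weight
$$
w(\alpha) \;:=\; \prod_{i=1}^n \frac{1}{\ell_i\bigl(\alpha|_{\{v_1,\dots,v_{i-1}\}}\bigr)} \ .
$$
A straightforward induction on depth shows that the total weight at any level is exactly $1$, and in particular $\sum_{\alpha \in \sat_V(F)} w(\alpha) = 1$.

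The heart of the argument is then the pointwise comparison
$$
\ell_i\bigl(\alpha|_{\{v_1,\dots,v_{i-1}\}}\bigr) \;\geq\; \ell(\alpha, v_i) \quad \text{for every } \alpha \in \sat_V(F) \text{ and every } i \ ,
$$
which holds because every $c$ with $\alpha[v_i \mapsto c] \in \sat_V(F)$ is also a valid extension of the prefix---witnessed by the assignment $\alpha[v_i \mapsto c]$ itself. Consequently $w(\alpha) \leq \prod_{x \in V} 1/\ell(\alpha,x)$, and summing over all satisfying assignments produces the desired inequality $1 = \sum_\alpha w(\alpha) \leq \sum_\alpha \prod_{x \in V} 1/\ell(\alpha,x)$. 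The one point to watch is precisely the direction of this comparison: the permutation-dependent count $\ell_i$ is \emph{larger} than the looseness because a prefix of $\alpha$ can often be completed in ways that disagree with $\alpha$ on coordinates appearing later in the permutation. Once this observation is in place, the rest is standard Kraft-style bookkeeping on a tree.
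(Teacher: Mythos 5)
Your argument is correct. Note that the paper itself does not prove this lemma at all --- it simply cites it as Lemma~1 of Feder and Motwani --- so what you have produced is a self-contained proof rather than a variant of one in the text. Your route is the natural Kraft-style one: fix an ordering $v_1,\dots,v_n$, build the prefix tree of partial assignments that extend to members of $\sat_V(F)$, split weight $1$ evenly down the tree so that each leaf $\alpha$ receives $w(\alpha)=\prod_i 1/\ell_i(\alpha|_{\{v_1,\dots,v_{i-1}\}})$ with $\sum_\alpha w(\alpha)=1$, and then compare exponents pointwise. The crucial step, and the one you get right, is the direction of the comparison $\ell_i\bigl(\alpha|_{\{v_1,\dots,v_{i-1}\}}\bigr)\geq \ell(\alpha,v_i)$: every $c$ with $\alpha[v_i\mapsto c]\in\sat_V(F)$ is witnessed as a valid extension of the prefix by $\alpha[v_i\mapsto c]$ itself, since that assignment agrees with $\alpha$ on $v_1,\dots,v_{i-1}$; the branching count can only exceed the looseness because completions need not agree with $\alpha$ on later coordinates. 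This gives $w(\alpha)\leq\prod_{x\in V}1/\ell(\alpha,x)$ and hence the claim. Two small points worth making explicit if you write this up: every internal node of the tree has at least one child precisely because its partial assignment extends to a satisfying assignment (satisfiability of $F$ gives the root this property), which is what guarantees all leaves sit at depth $n$ and that the weight-conservation induction goes through; and the bound holds for every fixed ordering, so no randomization over permutations is needed. This is essentially the same counting idea underlying Feder and Motwani's original proof, so your argument can stand in for the citation without loss.
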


This lemma is a quantitative version of the intuitive statement that
if a set $S \subseteq [d]^n$ is small, then there must be rather
isolated points in $S$. We now put everything together:
\begin{eqnarray*}
  \Pr[\ppsz(F,\pi) \textnormal{ is successful}]
  & = & 
  \sum_{\alpha \in \sat_V(F)} \Pr[\ppsz(F,\pi)\textnormal{ returns } \alpha]\\
  & \geq &
  \sum_{\alpha \in \sat_V(F)} 2^{-\sum_{x \in V}\E[\log_2|S(x,\alpha,\pi)|]} \ ,
\end{eqnarray*}
where the inequality follows from (\ref{good-jensen}). Together
with (\ref{ineq-S-F}), we see that
\begin{eqnarray*}
  \sum_{\alpha \in \sat_V(F)} 2^{-\sum_{x \in V}\E[\log_2|S(x,\alpha,\pi)|]}
  & \geq &
  \sum_{\alpha \in \sat_V(F)} 2^{-\sum_{x \in V}(\log_2(\ell(\alpha,x))+G(d,k))} \\
  & = & 2^{-nG(d,k)} \sum_{\alpha \in \sat_V(F)}
  2^{-\sum_{x \in V} \log_2(\ell(\alpha,x))} \\
  & = & 2^{-nG(d,k)} \sum_{\alpha \in \sat_V(F)} 
  \prod_{x \in V} \frac{1}{\ell(\alpha,x)} \\
  & \geq & 2^{-nG(d,k)} \ ,
\end{eqnarray*}
where the last inequality follows from Lemma~\ref{lemma-kraft}.  To
prove Theorem~\ref{main-theorem}, we evaluate the term $G(d,k)$.
Recall that $G(d,k) = \int_0^1 g(d,k,r) dr$, where $g(d,k,r) =
\E\left[\log_2\left(1+\sum_{c=2}^d Z^{(c)}\right)\right]$, and
$Z^{(2)},\dots,Z^{(d)}$ are independent binary variables with
expectation $1-r^{k-1}$ each. For $0 \leq j \leq d-1$, it holds that
\begin{eqnarray}
  \Pr\left[\sum_{c=2}^d Z^{(c)} = j\right] 
  = {{d-1} \choose j}(1-r^{k-1})^j (r^{k-1})^(d-1-j)
  \  . 
\label{eq-Z}
\end{eqnarray}
By the definition of expectation, it holds that
$$
g(d,k,r) = \sum_{j=0}^{d-1} \log_2(1+j) \Pr\left[\sum_{c=2}^d Z^{(c)} 
= j\right]\ .
$$
Combining this with (\ref{eq-Z}) and integrating over $r$ from $0$ to
$1$ yields the expressions Theorem~\ref{main-theorem}. This finishes
the proof.

\ignore{

\begin{theorem}
  Let $Z^{(2)},\dots,Z^{(d)}$ be independent binary variables with
  expectation $1-r^{k-1}$ each.  Define $G(d,k)  := 
  \int_0^1 \E\left[\log_2\left(1+\sum_{c=2}^d Z^{(c)}\right)\right]dr$. For
  a satisfiable $(d,k)$-CSP formula over $n$ variables, the 
  algorithm $\ppsz$ returns a satisfying truth assignment with probability
  at least $2^{-nG(d,k)}$.
\end{theorem}
}
\section{A Correlation Inequality}

The goal of this section to prove Lemma~\ref{worst-case-independent}.
We will prove a more general statement.
\begin{definition}
  A function $f: \{0,1\}^n \rightarrow \mathbb{R}$ is called
  {\em monotonically increasing}, or simply {\em monotone},
  if for all $\vec{x},\vec{y} \in \{0,1\}^n$ it holds that
  \begin{eqnarray}
    \vec{x} \leq \vec{y} \ \Rightarrow \ f(\vec{x}) \leq f(\vec{y}) \ ,
  \end{eqnarray}
  where $\vec{x} \leq \vec{y}$ is understood pointwise, i.e., $x_i
  \leq y_i$ for all $1 \leq i \leq n$.
\end{definition}
For example, the functions $\wedge$ and $\vee$, seen as functions from
$\{0,1\}^n$ to $\mathbb{R}$, are monotone, whereas the parity function
$\oplus$ is not.
\begin{definition}
  A function $f: \{0,1\}^n \rightarrow \mathbb{R}$ is called
  {\em submodular} if for all $\vec{x},\vec{y} \in \{0,1\}$, it holds
  that
  \begin{eqnarray}
    f(\vec{x}) + f(\vec{y}) \geq f(\vec{x} \wedge \vec{y})  + 
    f(\vec{x} \vee \vec{z}) \ ,
    \label{ineq-submodular}
  \end{eqnarray}
  where $\vee$ and $\wedge$ are understood pointwise, i.e.
  $(x_1,\dots,x_n) \vee (y_1,\dots,y_n) = 
  (x_1 \vee y_1,\dots, x_n \vee y_n)$.\\
\end{definition}
\paragraph{Example.} The OR-function $f: (x_1,\dots,x_n) \mapsto x_1 \vee
\dots \vee x_n$ is monotone and submodular: It is pretty clear that it
is monotone, so let us try to show submodularity.  There are two
cases: First, suppose at least one of $\vec{x}$ and $\vec{y}$ is
$\vec{0}$, say $\vec{y}=\vec{0}$. Then the left-hand side of
(\ref{ineq-submodular}) evaluates to $f(x)$, and the right-hand side
to $f(0) + f(x) = f(x)$. If neither $\vec{x}=\vec{0}$ nor
$\vec{y}=\vec{0}$, then the left-hand side is $2$, and the right-hand
side is obviously at most
$2$.
\paragraph{Example.} The AND-function $g: (x_1,\dots,x_n) \mapsto x_1 \wedge
\dots \wedge x_n$ is monotone, but not submodular. It is clearly
monotone, so let us show that it is not submodular. Consider $n=2$.
Set $\vec{x} = (0,1)$ and $\vec{y} = (1,0)$. Then
$f(\vec{x}) + f(\vec{y}) = 0$, but $f(\vec{x} \wedge \vec{y})
+ f(\vec{x} \vee \vec{y}) = f(0,0) + f(1,1) = 1.$\\

We define the notion of {\em glued restrictions} of functions.  Let
$A, B$ be two arbitrary sets, and let $f: A^n \rightarrow B$ be a
function. We define a new function $f'$ by ``gluing together'' two
input coordinates of $f$. Formally, for $1 \leq i \leq j \leq n$, we
define the function
$$
f': (a_1,\dots,a_n) \mapsto f(a_1,\dots,a_{j-1},
a_i, a_{j+1},\dots,a_n) \ .
$$
\begin{figure}
  \begin{center}
  \epsfig{file=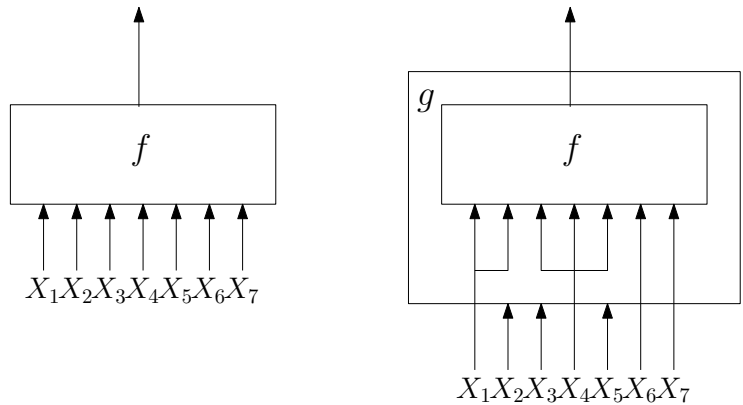, width=0.8\textwidth}
  \caption{A $7$-ary function $f$ and a gluing
    restriction $g$.}
  \label{figure-glueing}
  \end{center}
\end{figure}
The function $f'$
can be viewed as a restriction of $f$ to inputs $(a_1,\dots,a_n)$ for
which $a_i = a_j$. Thus, $f'$ can be seen as a function $A^{n-1}
\rightarrow B$. We prefer, however, to define it as a function $A^n
\rightarrow B$ that simply ignores the $j$\textsuperscript{th}
coordinate of its input.  We say $f'$ is obtained from $f$ by a {\em
  gluing step}. A function $g: A^n \rightarrow B$ is a {\em glued
  restriction} of $f$ if it can be
obtained from $f$ by a sequence of gluing steps.
See Figure~\ref{figure-glueing} for an intuition.

Consider a function $f: \{0,1\}^n \rightarrow \mathbb{R}$ and think of
feeding $f$ with random input bits. Formally, let $X_1,\dots,X_n$ be
$n$ independent binary random variables, each with expectation $p$. We
are interested in the term $\E[f(X_1,\dots,X_n)]$. In a second
scenario, we introduce dependencies between the $X_i$ by gluing some
of them together: For example, instead of choosing $X_1,\dots,X_n$
independently, we use the same bit for $X_1$, $X_2$, and $X_n$, thus
computing $\E[f(X_1,X_1,X_3,X_4,\dots,X_{n-1},X_1)]$ instead of
$\E[f(X_1,\dots,X_n)]$. With the terminology introduced above, we want
to compare $\E[f(X_1,\dots,X_n)]$ to $\E[g(X_1,\dots,X_n)]$, where $g$
is a glued restriction of $f$.  For general functions $f$, we cannot
say anything about how $\E[f(X_1,\dots,X_n)]$ compares to
$\E[g(X_1,\dots,X_n)]$. However,
if $f$ is submodular, we can.

To get an intuition, consider the boolean lattice $\{0,1\}^n$ with
$\vec{0}$ at the bottom and $\vec{1}$ at the top. In that lattice,
$\vec{x}\wedge\vec{y}$ is below $\vec{x}$ and $\vec{y}$, and
$\vec{x}\vee\vec{y}$ is above them. Thus, in some sense, the points
$\vec{x}$ and $\vec{y}$ lie between $\vec{x}\wedge\vec{y}$ and
$\vec{x}\vee\vec{y}$. See Figure~\ref{figure-lettuce} for an illustration.
\begin{figure}
\begin{center}
  \epsfig{width=0.4\textwidth,file=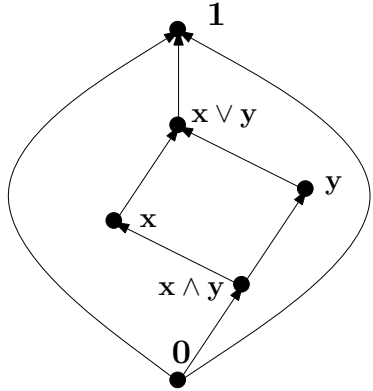}
\caption{The boolean lattice with four points $\vec{x}$,
$\vec{y}$, $\vec{x} \wedge \vec{y}$ and $\vec{x}\vee \vec{y}$.}
\label{figure-lettuce}
\end{center}
\end{figure}
On the left-hand side of
(\ref{ineq-submodular}), we evaluate $f$ at points that lie more to
the middle of the lattice, whereas on the right-hand side we evaluate
$f$ at points that lie more to the bottom or top of it.  The random
vector $(X_1,\dots,X_n)$ tends to lie around the
$pn$\textsuperscript{th} level of the lattice, whereas
$(X_1,X_1,X_3,X_4,\dots,X_{n-1},X_1)$ is less concentrated and more
often visits the extremes of the lattice. In the light of
(\ref{ineq-submodular}), we expect that biasing points towards the
extremes will decrease $\E[f]$. The following lemma formalizes this
intuition.
\begin{lemma}
  Let $f: \{0,1\}^n \rightarrow \mathbb{R}$ be a submodular function
  and $g$ be a glued restriction of it.  Let $X_1,\dots,X_n$ be
  independent binary random variables, each with expectation $p$. Then
  $\E[f(X_1,\dots,X_n)] \geq  \E[g(X_1,\dots,X_n)]$.
\label{lemma-corr}
\end{lemma}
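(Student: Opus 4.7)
My plan is to reduce the statement to the case of a single gluing step, and then to verify that case by conditioning on all but two coordinates and applying the submodularity inequality at a single pair of points.

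First, I observe that gluing preserves submodularity. Concretely, suppose $f'(a_1,\dots,a_n) = f(a_1,\dots,a_{j-1},a_i,a_{j+1},\dots,a_n)$. Writing $\tilde{\vec{x}}$ for the vector obtained from $\vec{x}$ by copying coordinate $i$ into coordinate $j$, one checks that $\widetilde{\vec{x}\wedge\vec{y}} = \tilde{\vec{x}} \wedge \tilde{\vec{y}}$ and analogously for $\vee$, so submodularity of $f$ transfers to $f'$. Since $g$ is obtained from $f$ by a finite sequence of gluing steps, all intermediate functions remain submodular, and by induction on the number of steps it suffices to prove $\E[f(\vec{X})] \geq \E[f'(\vec{X})]$ for a single gluing identifying coordinates $i$ and $j$.

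For the single-step inequality, I condition on all coordinates $\vec{X}_{-\{i,j\}}$ other than $X_i$ and $X_j$. Writing $f_{ab}$ for the value of $f$ at the configuration where coordinate $i$ is set to $a$, coordinate $j$ to $b$, and the remaining coordinates to their fixed values, the conditional expectations become
\begin{align*}
\E[f(\vec{X}) \mid \vec{X}_{-\{i,j\}}] &= (1-p)^2 f_{00} + p(1-p)(f_{10}+f_{01}) + p^2 f_{11},\\
\E[f'(\vec{X}) \mid \vec{X}_{-\{i,j\}}] &= (1-p) f_{00} + p\, f_{11},
\end{align*}
since $f'$ ignores $X_j$ and uses $X_i$ in its place. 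Subtracting, the difference simplifies to $p(1-p)\bigl[(f_{10}+f_{01}) - (f_{00}+f_{11})\bigr]$, which is nonnegative by applying the submodularity inequality to the pair $\vec{x},\vec{y}$ that differ from the fixed background only in coordinates $i,j$, with $(x_i,x_j)=(1,0)$ and $(y_i,y_j)=(0,1)$: indeed $\vec{x}\wedge\vec{y}$ and $\vec{x}\vee\vec{y}$ then have $(0,0)$ and $(1,1)$ respectively in those coordinates. Taking expectation over $\vec{X}_{-\{i,j\}}$ finishes the single-step case, and the induction closes the proof.

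The only part that requires any thought is noticing that one should condition on the $n-2$ untouched coordinates and exploit the identity $p(1-p)\cdot[(f_{10}+f_{01})-(f_{00}+f_{11})]$ for the difference; once that structure is in place, submodularity applies at exactly one pair of points. I do not expect any real obstacle here — the preservation of submodularity under gluing is a routine bookkeeping check, and the two-coordinate computation is a one-line application of the submodularity axiom.
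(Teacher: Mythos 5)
Your proof is correct and follows essentially the same route as the paper's: reduce to a single gluing step (noting gluing preserves submodularity), fix or condition on the $n-2$ untouched coordinates, and apply the submodularity inequality $f_{10}+f_{01} \geq f_{00}+f_{11}$ to the resulting two-variable function. Your verification that gluing preserves submodularity (checking the copy operation commutes with $\wedge$ and $\vee$, i.e.\ that the glued set is a sublattice) is in fact slightly more careful than the paper's one-line justification; otherwise the arguments coincide.
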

\begin{proof}
  It is easy to see that applying a gluing step to a submodular
  function results in a submodular function: After all, a gluing step
  simply means restricting the function to a subset of its domain.
  Therefore, it suffices to prove the lemma for a function $g$ that
  has been obtained from $f$ by a single gluing step.  Without loss
  of generality, we can assume that $X_{n-1}$ and $X_n$ have been
  glued together. We have to show that
  $$
  \E[f(X_1,\dots,X_n)] \geq   \E[f(X_1,\dots,X_{n-1},X_{n-1})]\ . 
  $$
  It suffices to show this inequality for every fixed
  $(n-2)$-tuple of values for $(X_1,\dots,X_{n-2})$. Formally,
  for $b_1,\dots,b_{n-2} \in \{0,1\}$, let 
  $$
  g: (x,y) \mapsto f(b_1,\dots,b_{n-2},x,y) \ . 
  $$
  The function $g$ is also submodular.  Let $X,Y$ be two independent
  binary random variables, each with expectation $p$. We have to show
  that $\E[g(X,Y)] \geq \E[g(X,X)]$.  This is not difficult:
  \begin{eqnarray*}
    \E[g(X,Y)] & = & 
    (1-p)^2 \cdot g(0,0) + p(1-p) \cdot g(1,0) +\\
    & & + (1-p)p \cdot g(0,1) + p^2 \cdot g(1,1) \\
    & = & (1-p)^2 \cdot g(0,0) + p(1-p)\cdot (g(1,0) + g(0,1)) +
     p^2 \cdot g(1,1) \\
    & \geq &
    (1-p)^2 \cdot g(0,0) + p(1-p)\cdot(g(0,0) + g(1,1)) + p^2 \cdot 
    g(1,1) \\
    & = & ( (1-p)^2 + p(1-p))\cdot g(0,0) + (p(1-p) + p^2) \cdot g(1,1) \\
    & = & (1-p) \cdot g(0,0) + p \cdot g(1,1) = \E[g(X,X)] \ ,
    \end{eqnarray*}
    where the inequality comes from the submodularity of $g$.
\qed
\end{proof}

\begin{lemma}
  Let $I \subseteq \mathbb{R}$ be an interval, and 
  let $f: \{0,1\}^n \rightarrow I$ be monotone and submodular,
  and $h : I \rightarrow \mathbb{R}$ be non-decreasing
  and concave. Then $h \circ f:  \{0,1\}^n \rightarrow \mathbb{R}$
  is also monotone and submodular.
  \label{lemma-concave}
\end{lemma}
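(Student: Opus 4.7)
The plan is to verify the two properties of $h \circ f$ separately, the monotonicity being essentially trivial and the submodularity reducing to a one-dimensional inequality about the concave function $h$ that exploits the submodularity of $f$ in a mild way.

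First, monotonicity of $h \circ f$ is immediate: if $\vec{x} \leq \vec{y}$, then $f(\vec{x}) \leq f(\vec{y})$ by monotonicity of $f$, and applying the non-decreasing $h$ preserves the inequality. This step requires no work.

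For submodularity, fix $\vec{x},\vec{y} \in \{0,1\}^n$ and abbreviate $a := f(\vec{x}\wedge\vec{y})$, $b := f(\vec{x})$, $c := f(\vec{y})$, $d := f(\vec{x}\vee\vec{y})$. Monotonicity of $f$ gives $a \leq b \leq d$ and $a \leq c \leq d$, and submodularity of $f$ gives $b + c \geq a + d$, equivalently $d - c \leq b - a$ and $d - b \leq c - a$. The goal is $h(b) + h(c) \geq h(a) + h(d)$. Without loss of generality assume $b \leq c$. Set $s := d - c \geq 0$. Since $a \leq c$ and $h$ is concave, the differences of $h$ over equal-length intervals are non-increasing, so
\begin{equation*}
h(a+s) - h(a) \;\geq\; h(c+s) - h(c) \;=\; h(d) - h(c).
\end{equation*}
Rearranging gives $h(a+s) + h(c) \geq h(a) + h(d)$. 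Finally, the submodularity of $f$ yields $a + s = a + d - c \leq b$, and since $h$ is non-decreasing we get $h(b) \geq h(a+s)$. Combining, $h(b) + h(c) \geq h(a+s) + h(c) \geq h(a) + h(d)$, which is precisely the submodularity of $h \circ f$.

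The one nontrivial step is step~3 above: translating the scalar inequality $b + c \geq a + d$ into a statement about $h$ requires using both concavity (to compare the increments $h(a+s)-h(a)$ and $h(d)-h(c)$ over the two intervals $[a, a+s]$ and $[c, d]$) and monotonicity of $h$ (to pass from $h(a+s)$ to $h(b)$, since $a+s$ need not equal $b$). That both hypotheses on $h$ are used here is reassuring: neither can be dropped, since concavity alone cannot handle the case $a+s < b$, and monotonicity alone does not yield a submodularity-type inequality. Everything else is bookkeeping.
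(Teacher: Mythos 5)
Your proof is correct and takes essentially the same route as the paper: monotonicity is immediate, and submodularity is reduced to a one-dimensional inequality for $h$ at the four values $f(\vec{x}\wedge\vec{y}) \le f(\vec{x}), f(\vec{y}) \le f(\vec{x}\vee\vec{y})$, with the submodularity of $f$ controlling the relative sizes of the gaps. The only difference is how that scalar step is verified --- you use the decreasing-increments property of concave functions plus the monotonicity of $h$ to absorb the slack $b-(a+s)$, whereas the paper proves an equivalent claim by comparing midpoints of two chords --- and your version works (just note explicitly that $a+s \in I$, which holds since $a \le a+s \le b$ and $I$ is an interval).
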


\begin{proof}
  It is clear that $h \circ f$, being the composition of two monotone
  functions, is again monotone. To show submodularity, consider
  $\vec{x},\vec{y} \in \{0,1\}^n$. Without loss of generality,
  $f(\vec{x}) \leq f(\vec{y})$. Using monotonicity, we see that
  $$
  f(\vec{x} \wedge \vec{y}) \leq f(\vec{x}) \leq f(\vec{y}) \leq
  f(\vec{x} \vee \vec{y}) \ .
  $$
  \textbf{Claim.} If $s \leq t$ are in $I$, and $a \geq b \geq
  0$ are such that $s- a \in I$ and $t+b \in I$, then $h(s)+h(t) \geq
  h(s-a) + h(t+b)$.\\ 
  \begin{figure}
    \begin{center}
      \epsfig{file=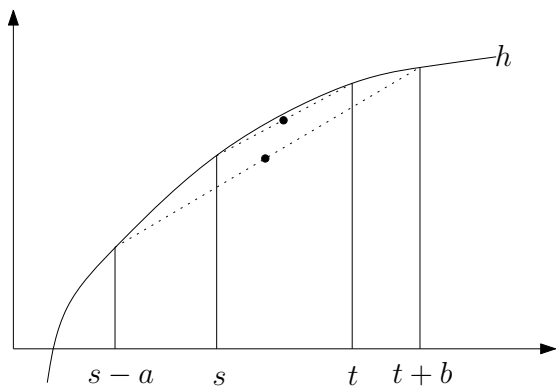, width=0.6\textwidth}
      \caption{A monotone concave function $f$ and two line
        segments.} 
      \label{figure-concave}
    \end{center}
  \end{figure}
  See Figure~\ref{figure-concave} for an illustration. To prove the
  claim, compare the line from $(s,h(s))$ to $(t,h(t))$ to the line
  from $(s-a,h(s-a))$ to $(t+b,h(t+b))$.  The midpoints of those lines
  have the coordinates
  $$
  \left(\frac{s+t}{2},\frac{h(s)+h(t)}{2}\right) \textnormal { and }
  \left(\frac{s-a+t+b}{2},\frac{h(s-a)+h(t+b)}{2}\right) \ ,
  $$
  respectively. Since $a \geq b$, the first midpoint lies to the right
  of the second midpoint. Since both lines have positive slope (by
  monotonicity of $h$) and the first line lies above the second, we
  conclude that also the first midpoint lies above the second.
  Therefore
  $(h(s-a)+h(t+b))/2 \leq (h(s)+h(t))/2$, as claimed.\\

  We apply the above claim with $s = f(\vec{x})$, $t = f(\vec{y})$, $a
  = f(\vec{x}) - f(\vec{x \wedge y})$ and $b = f(\vec{x} \vee
  {\vec{y}}) - f(\vec{y})$. Note that $s,t,s-a,t+b\in I$ and $a,b
  \geq 0$. To apply the claim we need that $a \geq b$, i.e.,
  $$
  f(\vec{x}) - f(\vec{x \wedge y}) \geq 
   f(\vec{x} \vee \vec{y}) - f(\vec{y}) \ ,
  $$
  which follows from submodularity. The claim implies
  that $h(s)+h(t) \geq h(s-a) + h(t+b)$, which with these
  particular values of $s$,$t$,$a$, and $b$ yields
  $h(f(\vec{x}))+h(f(\vec{y})) \geq
  h(f(\vec{x}\wedge\vec{y}))+h(f(\vec{x}\vee\vec{y}))$.
  \qed
\end{proof}

\ignore{
The following lemma can be verified quickly.
\begin{lemma}
  Let $f_1,\dots,f_k : \{0,1\}^n \rightarrow \mathbb{R}$ be 
  monotone and submodular. Then $f_1 + \dots + f_k$ is also
  monotone and submodular.
\label{lemma-sum-submodular}
\end{lemma}
\begin{corollary}
  The function $f: \{0,1\}^{(d-\ell)(k-1)} \rightarrow \mathbb{R}$ defined
  by
  $$
  (Y^{(\ell+1)}_1,\dots,Y^{(d)}_{k-1})
  \mapsto \log_2 \left(\ell + \sum_{c=\ell+1}^d \OR(Y^{(c)}_1,\dots,Y^{(c)}_{k-1})
  \right)
  $$
  is submodular and monotone.
\end{corollary}
\begin{proof}
  The function $\OR$ is monotone and submodular, and so is the
  constant $\ell$, viewed as a $0$-ary boolean function. By
  Lemma~\ref{lemma-sum-submodular}, the sum $\ell+ \sum_{c=\ell+1}^d
  \OR(Y^{(c)}_1,\dots,Y^{(c)}_{k-1})$ is submodular and monotone.
  Applying Lemma~\ref{lemma-concave} with the interval $I =
  [1,\infty)$, the logarithm of that sum yields again a submodular and
  monotone function, since $\log_2(\cdot)$ is nondecreasing and
  concave on $[1,\infty)$.
  \qed
\end{proof}
}

\begin{proof}[Proof of Lemma~\ref{worst-case-independent}]
  We define $(d-\ell)(k-1)$ random variables $Z_j^{(c)}$ for $1 \leq j
  \leq k-1$ and $\ell < c \leq d$. These random variables are 
  all independent and each has expectation $1-r$. We define
  the function $f: \{0,1\}^{(d-\ell)(k-1)}$ by
  \begin{eqnarray}
  f(x_1^{(\ell+1)},\dots,x^{(d)}_{k-1}) =
  \log_2 \left( \ell + \sum_{c=\ell+1}^{d}
    \OR(x_1^{(c)} \vee \dots \vee  x_{k-1}^{(c)})\right) \ .
  \label{def-of-f}
  \end{eqnarray}
  This function is clearly monotone. We claim that it is submodular:
  The $\OR$-function is submodular, and it is easy to check that a sum
  of submodular functions is again submodular. Finally, the function
  $t \mapsto \log_2(\ell + t)$ is concave.  We apply
  Lemma~\ref{lemma-concave} with the interval $I=[0,\infty)$, the
  submodular function $\sum_{c=\ell+1}^{d} \OR(x_1^{(c)} \vee \dots
  \vee x_{k-1}^{(c)})$, which has domain $I$, and the concave function
  $t \mapsto \log_2(\ell+t)$. Thus $f$ is submodular and monotone.  To
  prove Lemma~\ref{worst-case-independent}, we have to show that
  \begin{eqnarray}
    \E\left[\log_2\left(\ell + \sum_{c=\ell+1}^d
        Y^{(c)}\right) \ | \ \pi(x)=r\right] \leq \E\left[\log_2\left(\ell +
        \sum_{c=\ell+1}^{d} Z^{(c)}\right)\right] \ ,
    \label{ineq-worst-case-indep}
  \end{eqnarray}
  where the $Z^{(c)}$ are independent binary random variables with
  expectation $1 - r^{k-1}$ and 
  $Y^{(c)} := \OR(Y_1^{(c)}, \dots, Y_{k-1}^{(c)})$, with
  \begin{eqnarray*}
    Y_j^{(c)} := \left\{ \begin{array}{ll}
        1 &  \textnormal{ if } y_j^{(c)} \textnormal{ comes after } x 
        \textnormal{ in the permutation } \pi \ ,\\ 
        0 &  \textnormal{otherwise} \ .
      \end{array}
    \right.
  \end{eqnarray*}
  The left-hand side of (\ref{ineq-worst-case-indep}) thus reads
  as 
  $$
  \E[f(Y_1^{(\ell+1)},\dots,Y^{(d)}_{k-1} \ | \ \pi(x)=r] 
  $$
  for $f$ as defined in (\ref{def-of-f}). Since the
  $Z^{(c)}$ are independent binary random variables with expectation
  $1 - r^{k-1}$, their dis\-tri\-bu\-tion is identical to the
  distribution of
  $\OR(Z_1^{(c)},\dots,Z_{k-1}^{(c)})$, and the right-hand side of
  (\ref{ineq-worst-case-indep}) is equal to
  $$
  \E[f(Z_1^{(\ell+1)},\dots,Z^{(d)}_{k-1}] \ .
  $$
  We have to show that
  \begin{eqnarray}
  \E[f(Y_1^{(\ell+1)},\dots,Y^{(d)}_{k-1} \ | \ \pi(x)=r] \leq 
  \E[f(Z_1^{(\ell+1)},\dots,Z^{(d)}_{k-1}]
  \label{ineq-worst-case-indep-2}
  \end{eqnarray}
  Conditioned on $\pi(x)=r$, the distribution of each $Y_j^{(c)}$ is
  identical to that of $Z_j^{(c)}$, but some $Y_j^{(c)}$ are ``glued
  together'', since the underlying variables $y_j^{(c)}$ of our CSP
  formula need not be distinct. We can, however, assemble the
  $Y_j^{(c)}$ into groups according to their underlying variables
  $y_j^{(c)}$ such that (i) random variables from the same group have
  the same underlying $y_j^{(c)}$ and thus are identical, (ii) random
  variables from different groups are independent. Thus,
  $f(Y_1^{(\ell+1)},\dots,Y^{(d)}_{k-1}$ is a glued restriction of
  $f(Z_1^{(\ell+1)},\dots,Z^{(d)}_{k-1}$ or rather can be coupled with
  a glued restriction thereof, and thus by Lemma~\ref{lemma-corr}, the
  expectation of the former is at most the expectation of the latter.
  Therefore (\ref{ineq-worst-case-indep-2}) holds.  \qed
\end{proof}

\ignore{
\section{Concluding Remarks}
For $d=2$, the algorithm described above is exactly the algorithm
$\ppz$ as stated by Paturi, Pudl\'ak and Zane~\cite{feder-motwani}, and our
bound on the success probability coincides with theirs. For $d \geq 3$
and $k=2$, our algorithm equals the one by Feder and
Motwani~\cite{feder-motwani}. However, for larger values of $k$, the critical
constraints $C_{\ell+1},\dots,C_d$ in (\ref{critical-clauses}) can
overlap in complicated patterns, and it is not clear a priori which
pattern constitutes the worst case.  Luckily, one can prove that the
worst case happens when the constraints share no variables besides
$x$.
}
\bibliographystyle{abbrv}
\bibliography{refs}

\end{document}